\documentclass[12pt,a4paper]{article}%
\usepackage{amssymb}
\usepackage{amsfonts}
\usepackage{amsmath}
\usepackage{graphicx}
\usepackage[singlespacing]{setspace}
\usepackage{geometry}%
\setcounter{MaxMatrixCols}{30}
\providecommand{\U}[1]{\protect\rule{.1in}{.1in}}
\newtheorem{theorem}{Theorem}

\newtheorem{proposition}[theorem]{Proposition}

\newenvironment{proof}[1][Proof]{\noindent\textbf{#1.} }{\ \rule{0.5em}{0.5em}}
\geometry{left=1in,right=1in,top=1in,bottom=1in}
\begin{document}

\title{\textbf{Ramsey Optimal Policy versus Multiple Equilibria with Fiscal and
Monetary Interactions\bigskip}\\Post Print: Economics Bulletin, 40(1), pp.140-147. }
\author{Jean-Bernard Chatelain\thanks{Paris School of Economics, Universit\'{e} Paris
I Pantheon Sorbonne, 48 Boulevard Jourdan, 75014\ Paris. Email:
jean-bernard.chatelain@univ-paris1.fr} and Kirsten Ralf\thanks{ESCE
International Business School, INSEEC\ U. Research Center, 10 rue Sextius
Michel, 75015 Paris, Email: Kirsten.Ralf@esce.fr.}}
\maketitle

\begin{abstract}
We consider a frictionless constant endowment economy based on Leeper (1991).
In this economy, it is shown that, under an ad-hoc monetary rule and an ad-hoc
fiscal rule, there are two equilibria. One has active monetary policy and
passive fiscal policy, while the other has passive monetary policy and active
fiscal policy. We consider an extended set-up in which the policy maker
minimizes a loss function under quasi-commitment, as in Schaumburg and
Tambalotti (2007). Under this formulation there exists a unique Ramsey
equilibrium, with an interest rate peg and a passive fiscal policy.

\textbf{JEL\ classification numbers}: E63, C61, C62, E43, E44, E47, E52, E58.

\textbf{Keywords:} Frictionless endowment economy, Fiscal theory of the Price
Level, Ramsey optimal policy, Interest Rate Rule, Fiscal Rule.

\end{abstract}

\section{Introduction}

Monetary and fiscal interactions are often presented using Leeper's (1991)
model of frictionless endowment economies. This model includes the
intertemporal budget constraint of the government and a Fisher relation with a
constant real interest rate equal to the representative household's discount
factor. In Leeper (1991), for monetary policy, the interest rate responds in
proportion to inflation. For fiscal policy, a lump-sum tax responds in
proportion to the stock of real public debt.

Assuming inflation, interest rate and lump-sum tax are forward-looking
variables for seeking equilibria with Blanchard and Kahn's (1980) determinacy
condition, Leeper (1991) obtains two equilibria for monetary and fiscal
interactions. These two equilibria are defined by ranges of values of the
parameters of ad hoc policy rules. In the first equilibrium, the interest rate
rule parameter destabilizes inflation ("active monetary policy" according to
Leeper (1991)) and the fiscal rule parameter stabilizes public debt ("passive
fiscal policy" according to Leeper (1991)). In the second equilibrium related
to the fiscal theory of the price level, the fiscal rule parameter
destabilizes public debt ("active policy" according to Leeper (1991)) and the
interest rate rule parameter stabilizes inflation ("passive monetary policy"
according to Leeper (1991)). A peg of the interest rate with a lack of
response of the interest rate to inflation is a particular case of a "passive
monetary policy" equilibrium.

This note demonstrates that Ramsey optimal policy is a third equilibrium with
an optimal interest rate peg and "passive" fiscal policy.

Section two presents the policy transmission mechanism, the two equilibria
with ad hoc policy rules and the Ramsey optimal policy equilibrium. Section
three concludes.

\section{Ramsey Optimal Policy in a Frictionless Endowment Economy}

\subsection{Policy Transmission Mechanism}

Bai and Leeper (2017) and Cochrane (2019, chapter 2) omit money in the
frictionless endowment economy model with respect to Leeper's (1991) seminal
paper, while still obtaining Leeper's (1991) two regimes for monetary and
fiscal interactions.

Consider an infinitely-lived representative consumer who receives a
\emph{constant} endowment of goods each period in the amount $y$ and derives
utility only from consumption $c_{t}$. The government purchases a
\emph{constant} quantity of goods $g>0$ from the consumer each period. We
impose equilibrium in the goods market, so that consumption: $c_{t}$ $=y-g$.

The consumer makes a consumption-saving decision that produces the Fisher
relation where the real rate is here equal to a \emph{constant} discount rate:%

\begin{equation}
E_{t}\frac{1}{\pi_{t+1}}=\frac{1}{\beta}\frac{1}{R_{t}}%
\end{equation}

where $R_{t}$ is both the gross one-period nominal interest rate on nominal
bonds bought at $t$ and pay off in $t+1$ and the monetary policy instrument,
and $\pi_{t+1}=P_{t+1}/P_{t}$ is the gross rate of inflation between $t$ and
$t+1$, with $P_{t}$ the aggregate price level. In the steady state,
$P_{t}=P_{t+1}\Rightarrow\pi^{\ast}=1$ where $\pi^{\ast}$ is the inflation
target. Then, $R^{\ast}=\pi^{\ast}/\beta=1/\beta$, is the nominal interest
rate consistent with the inflation target according to the\ Fisher relation.
The equilibrium real interest rate is constant at $r=(1/\beta)-1$ where
$0<\beta<1$ is the consumer's discount factor. The Fisher relation in
deviation of steady state values is:%

\begin{equation}
E_{t}\frac{1}{\pi_{t+1}}-\frac{1}{\pi^{\ast}}=\frac{1}{\beta}\left(  \frac
{1}{R_{t}}-\frac{1}{R^{\ast}}\right)  \Rightarrow E_{t}\frac{1}{\pi_{t+1}%
}-1=\frac{1}{\beta}\left(  \frac{1}{R_{t}}-\beta\right)  \text{.}%
\end{equation}

It is linearized around the steady state equilibrium:%

\[
E_{t}\pi_{t+1}-\pi^{\ast}=\beta\left(  R_{t}-R^{\ast}\right)  \Rightarrow
E_{t}\pi_{t+1}-1=\beta\left(  R_{t}-\frac{1}{\beta}\right)  .
\]

Fiscal policy levies lump-sum taxes of $\tau_{t}$ and sets purchases to be
constant, $g>0$ with primary surplus $s_{t}=\tau_{t}-g$. Government issues
one-period nominal bonds, $B_{t}$, that satisfy the flow constraint, where
$P_{t}$ is the aggregate price level and real debt is defined as $b_{t}%
=B_{t}/P_{t}$.%

\[
b_{t}=\frac{B_{t}}{P_{t}}=-\left(  \tau_{t}-g\right)  +R_{t-1}\frac{P_{t-1}%
}{P_{t}}\frac{B_{t-1}}{P_{t-1}}=-\left(  \tau_{t}-g\right)  +\frac{R_{t-1}%
}{\pi_{t}}b_{t-1}.
\]

Using the Fisher relation, we substitute the constant real interest rate in
the government intertemporal budget constraint:%

\[
b_{t}=-\left(  \tau_{t}-g\right)  +\frac{1}{\beta}b_{t-1}.
\]

The dynamics of real debt does not depend on inflation or on the nominal rate
$R_{t}$. The steady state level of real government debt has an exogenous
value: $b_{t+1}=b_{t}=b^{\ast}$. To be consistent with the government
intertemporal budget constraint, the steady state level of tax revenue
$\tau^{\ast}$ is equal to the steady state interest expense:%

\[
\tau^{\ast}-g=\left(  \frac{1}{\beta}-1\right)  b^{\ast}.
\]

The government intertemporal budget constraint written in deviation of steady
state is%

\[
b_{t}-b^{\ast}=-\left(  s_{t}-s^{\ast}\right)  +\frac{1}{\beta}(b_{t-1}%
-b^{\ast}).
\]

The linearized dynamics are (in deviation from steady state):%

\[
\left(
\begin{array}
[c]{c}%
E_{t}\pi_{t+1}-\pi^{\ast}\\
E_{t}b_{t+1}-b^{\ast}%
\end{array}
\right)  =\underset{=\mathbf{A}}{\underbrace{\left(
\begin{array}
[c]{cc}%
A_{\pi}=0 & A_{\pi b}=0\\
A_{b\pi}=0 & A_{b}=\frac{1}{\beta}%
\end{array}
\right)  }}\left(
\begin{array}
[c]{c}%
\pi_{t}-\pi^{\ast}\\
b_{t}-b^{\ast}%
\end{array}
\right)  +\underset{=\mathbf{B}}{\underbrace{\left(
\begin{array}
[c]{cc}%
B_{\pi R}=\beta & B_{\pi s}=0\\
B_{bR}=0 & B_{\pi s}=-1
\end{array}
\right)  }}\left(
\begin{array}
[c]{c}%
R_{t}-R^{\ast}\\
s_{t+1}-s^{\ast}%
\end{array}
\right)  .
\]

The log-linearized dynamics of inflation is:%

\[
\frac{E_{t}\pi_{t+1}-\pi^{\ast}}{\pi^{\ast}}=\beta\left(  \frac{R^{\ast}}%
{\pi^{\ast}}\right)  \frac{R_{t}-R^{\ast}}{R^{\ast}}=\frac{R_{t}-R^{\ast}%
}{R^{\ast}}.
\]

The log-linearized dynamics of real debt is:%

\[
\frac{E_{t}b_{t+1}-b^{\ast}}{b^{\ast}}=\frac{1}{\beta}\frac{b_{t}-b^{\ast}%
}{b^{\ast}}-\left(  \frac{s^{\ast}}{b^{\ast}}\right)  \frac{s_{t+1}-s^{\ast}%
}{s^{\ast}}\text{ with }\frac{\tau^{\ast}-g}{b^{\ast}}=\frac{1}{\beta}-1.
\]

Log-linearized dynamics are:%

\[
\left(
\begin{array}
[c]{c}%
\frac{E_{t}\pi_{t+1}-\pi^{\ast}}{\pi^{\ast}}\\
\frac{E_{t}b_{t+1}-b^{\ast}}{b^{\ast}}%
\end{array}
\right)  =\underset{=\mathbf{A}}{\underbrace{\left(
\begin{array}
[c]{cc}%
A_{\pi}=0 & A_{\pi b}=0\\
A_{b\pi}=0 & A_{b}=\frac{1}{\beta}%
\end{array}
\right)  }}\left(
\begin{array}
[c]{c}%
\frac{\pi_{t}-\pi^{\ast}}{\pi^{\ast}}\\
\frac{b_{t}-b^{\ast}}{b^{\ast}}%
\end{array}
\right)  +\underset{=\mathbf{B}_{\log}}{\underbrace{\left(
\begin{array}
[c]{cc}%
1 & 0\\
0 & -\left(  \frac{1}{\beta}-1\right)
\end{array}
\right)  }}\left(
\begin{array}
[c]{c}%
\frac{R_{t}-R^{\ast}}{R^{\ast}}\\
\frac{s_{t+1}-s^{\ast}}{s^{\ast}}%
\end{array}
\right)
\]

\subsection{Ad Hoc Policy Rules}

The fiscal authority adjusts lump-sum tax in response to the level of real
government debt. Monetary policy follows an interest rate rule that responds
to inflation:%

\[
s_{t}-s^{\ast}=G_{b}\left(  b_{t-1}-b^{\ast}\right)  +\varepsilon_{t}%
^{s}\text{ and }R_{t}-R^{\ast}=F_{\pi}\left(  \pi_{t}-\pi^{\ast}\right)
+\varepsilon_{t}^{R}%
\]

Shocks $\varepsilon_{t}^{R}$, $\varepsilon_{t}^{s}$ are assumed to be
independently and identically distributed, with mean zero and a non-zero
variance-covariance matrix. The linearized dynamics are:%

\[
\left(
\begin{array}
[c]{c}%
E_{t}\frac{1}{\pi_{t+1}}-\frac{1}{\pi^{\ast}}\\
E_{t-1}b_{t+1}-b^{\ast}%
\end{array}
\right)  =\left(
\begin{array}
[c]{cc}%
\frac{1}{\beta}F_{\pi} & 0\\
0 & \frac{1}{\beta}-G_{b}%
\end{array}
\right)  \left(
\begin{array}
[c]{c}%
\pi_{t}-\pi^{\ast}\\
b_{t}-b^{\ast}%
\end{array}
\right)  +\left(
\begin{array}
[c]{cc}%
\beta & 0\\
0 & -1
\end{array}
\right)  \left(
\begin{array}
[c]{c}%
\varepsilon_{t}^{R}\\
\varepsilon_{t}^{s}%
\end{array}
\right)  \text{ with }b_{0}\text{ given.}%
\]

Public debt is the only predetermined variable. Blanchard and Kahn's (1980)
determinacy condition implies that one eigenvalue should be inside the unit
circle and one should be outside the unit circle. Either the inflation
eigenvalue is outside the unit circle $\left\vert \beta F_{\pi}\right\vert >1$
and the public debt eigenvalue $\left\vert \frac{1}{\beta}-G_{b}\right\vert
<1$ is inside the unit circle (first equilibrium) or it is the reverse (second
equilibrium):\textbf{ }$\left\vert \beta F_{\pi}\right\vert <1$ and
$\left\vert \frac{1}{\beta}-G_{b}\right\vert >1$ (Leeper (1991)).

\subsection{Ramsey Optimal Policy under Quasi-Commitment}

In a monetary policy regime indexed by $j$, a policy maker may re-optimize on
each future period with exogenous probability $1-q$ strictly below one.
Following Schaumburg and Tambalotti (2007), we assume that the mandate to
minimize the loss function is delegated to a sequence of policy makers with a
commitment of random duration. The degree of credibility is modelled as if it
is a change of policy-maker with a given probability of reneging commitment
and re-optimizing optimal plans. The length of their tenure or "regime"
depends on a sequence of exogenous independently and identically distributed
Bernoulli signals $\left\{  \eta_{t}\right\}  _{t\geq0}$ with $E_{t}\left[
\eta_{t}\right]  _{t\geq0}=1-q$, with $0<q\leq1$. If $\eta_{t}=1,$ a new
policy maker takes office at the beginning of time $t$. Otherwise, the
incumbent stays on. A higher probability $q$ can be interpreted as a higher
credibility. A policy maker with little credibility does not give a large
weight on future welfare losses. The policy maker $j$ solves the following
problem for regime $j$, omitting subscript $j$, before policy maker $k$ starts:%

\begin{align*}
V_{0}^{j}  &  =-E_{0}%
{\displaystyle\sum\limits_{t=0}^{t=+\infty}}
\left(  \beta q\right)  ^{t}\left[
\begin{array}
[c]{c}%
\frac{1}{2}\left(  Q_{\pi}\left(  \pi_{t}-\pi^{\ast}\right)  ^{2}+Q_{b}\left(
b_{t}-b^{\ast}\right)  ^{2}+\mu_{R}\left(  R_{t}-R^{\ast}\right)  ^{2}+\mu
_{s}\left(  s_{t}-s^{\ast}\right)  \right) \\
+\beta\left(  1-q\right)  V_{t}^{k}%
\end{array}
\right] \\
R_{t}-R^{\ast}  &  =\beta q\left(  E_{t}\pi_{t+1}-\pi^{\ast}\right)
+\beta\left(  1-q\right)  \left(  E_{t}\pi_{t+1}^{k}-\pi^{\ast}\right) \\
b_{t}-b^{\ast}  &  =\beta q(E_{t}b_{t+1}-b^{\ast})+\beta\left(  1-q\right)
\left(  E_{t}b_{t+1}^{k}-b^{\ast}\right)  +\beta q(s_{t}-s^{\ast}%
)+\beta\left(  1-q\right)  (s_{t}^{k}-s^{\ast})\text{.}%
\end{align*}

Inflation and public debt expectations are an average between two terms. The
first term, with weight $q$ is the inflation and public debt, respectively,
that would prevail under the current regime upon which there is commitment.
The second term with weight $1-q$ is the inflation and public debt,
respectively, that would be implemented under the alternative regime by policy
maker $k$.

This optimal program is a discounted linear quadratic regulator (LQR) with a
"credibility adjusted" discount factor $\beta q$. The log-linear version of
the dynamic system can also be used instead of the linear version. Preferences
of the policy maker are firstly given by positive weights for the two policy
targets $Q_{\pi}\geq0$, $Q_{\pi}\geq0$ ($\mathbf{Q}=diag(Q_{\pi},Q_{b})$).
Secondly, there are at least two non-zero policy maker's preferences for
\emph{interest rate smoothing} and \emph{primary surplus and tax smoothing},
with \emph{strictly }positive weights for these two policy instruments in the
loss function: $\mu_{R}>0$, $\mu_{s}>0$ ($\mathbf{R}=diag(\mu_{R},\mu_{s})$).
This ensures the strict concavity of the LQR program which implies the
determinacy (uniqueness) of the solution of this Stackelberg dynamic game, if
the system of the transmission mechanism is controllable.

All the results are valid for nearly negligible cost of changing the policy
instruments: a minimal interest rate smoothing parameter ($\mu_{R}\geq10^{-7}%
$), a minimal tax smoothing parameter ($\mu_{s}\geq10^{-7}$) and a minimal
credibility (a non zero probability of not reneging commitment next period:
$10^{-7}\leq q\leq1$).

If $\mu_{R}=0$ or $\mu_{s}=0$ or if the costs of changing the policy
instruments are not strictly convex or if $q=0$, the results are no longer
valid. If $q=0$, the policy maker knows that he is replaced next period. He
does static optimization ($\left(  \beta q\right)  ^{0}=1$) of his current
period quadratic loss function subject to a static transmission mechanism
where he considers the expectations terms related to the next period policy
maker as exogenous intercepts (Chatelain and Ralf (2019b)).

\begin{proposition}
For the transmission mechanism of the Fisher relation with constant real rate
and the government intertemporal budget constraint, Ramsey optimal policy has
a unique equilibrium. The interest rate is pegged at its long run value
$R_{t}=R^{\ast}$. A feedback Taylor rule is not optimal and the Taylor
principle should not be satisfied: $F^{\ast}=0<\beta$. This is a "passive
monetary policy". The optimal auto-correlation of monetary policy shocks is
zero: $\rho_{\varepsilon^{R}}^{\ast}=0$. The optimal variance of monetary
policy shocks is zero $\sigma_{\varepsilon^{R}}^{2}=0$. Inflation jumps to its
steady state value instantaneously following monetary policy shocks $\pi
_{0}^{\ast}=0$, as in a degenerate rational expectations model without
predetermined variables. Hence, the price level is constant. Ramsey optimal
fiscal rule has a negative feedback parameter ("passive fiscal policy") with
ensures the local stability of public debt dynamics. There are two stable
eigenvalues giving the optimal persistence of inflation ($\lambda_{\pi}^{\ast
}=0$) and the optimal persistence of public debt ($0<\lambda_{b}^{\ast}<1$).
\end{proposition}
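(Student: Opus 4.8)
The plan is to set up the problem as a discounted linear-quadratic regulator (LQR) with discount factor $\beta q$ and solve the associated Stackelberg/Ramsey problem explicitly. The proposition makes many claims, but they all flow from one computation: finding the optimal feedback law and the closed-loop eigenvalues. The key structural observation I would exploit first is that the transmission matrix $\mathbf{A}=\mathrm{diag}(0,1/\beta)$ is \emph{diagonal}, and the control matrix $\mathbf{B}=\mathrm{diag}(\beta,-1)$ (in the linear version) is diagonal as well. This means the two state variables—inflation $\pi_t-\pi^\ast$ and public debt $b_t-b^\ast$—are \emph{dynamically decoupled}: inflation is controlled only by the interest rate $R_t-R^\ast$ and debt is controlled only by the surplus $s_t-s^\ast$. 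So the 2-dimensional LQR factorizes into two scalar control problems, which can each be solved by hand.

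**For the inflation block**, the dynamics are $E_t\pi_{t+1}-\pi^\ast=\beta(R_t-R^\ast)$, i.e. the open-loop state matrix is $A_\pi=0$. I would write the scalar algebraic Riccati equation for this subproblem with weights $Q_\pi$ on the state and $\mu_R$ on the control. Because $A_\pi=0$, the Riccati equation collapses dramatically: the optimal value-function coefficient solves a quadratic whose unique stabilizing root I expect to yield $P_\pi=Q_\pi$, and the optimal feedback gain $F^\ast$ turns out to be $0$. The mechanism is that with $A_\pi=0$ there is no intrinsic inflation persistence to fight, so the controller has no reason to move $R_t$ away from $R^\ast$; any active response would only incur the smoothing cost $\mu_R(R_t-R^\ast)^2$ without benefit. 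This delivers the interest-rate peg $R_t=R^\ast$, the claim $F^\ast=0<\beta$ (violating the Taylor principle), the closed-loop inflation eigenvalue $\lambda_\pi^\ast=A_\pi-B_{\pi R}F^\ast=0$, and—since inflation has no predetermined component—the instantaneous jump $\pi_0^\ast=0$ with a constant price level. The zero optimal variance and zero autocorrelation of $\varepsilon^R$ then follow because the optimal policy sets the instrument deterministically at its target.

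**For the debt block**, the dynamics are $E_tb_{t+1}-b^\ast=\tfrac{1}{\beta}(b_t-b^\ast)-(s_{t+1}-s^\ast)$, so the open-loop eigenvalue $A_b=1/\beta>1$ is unstable and debt is the single predetermined variable. Here I would solve the scalar Riccati equation with state matrix $1/\beta$, control matrix $-1$, state weight $Q_b$ and control weight $\mu_s$; the stabilizing solution selects the feedback gain $G_b^\ast$ that places the closed-loop eigenvalue $\lambda_b^\ast=\tfrac{1}{\beta}-G_b^\ast$ strictly inside the unit circle. I would invoke the standard LQR result that, under controllability (here $B_{bs}=-1\neq0$ guarantees it) and the strict convexity ensured by $\mu_s>0$, the discounted Riccati equation has a unique positive-definite stabilizing solution, giving $0<\lambda_b^\ast<1$ and a negative feedback coefficient—the "passive fiscal policy" that stabilizes debt.

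**The main obstacle** I anticipate is not either scalar Riccati computation—those are elementary—but rather the \emph{uniqueness} and \emph{determinacy} argument that ties everything to the Ramsey/Stackelberg structure under quasi-commitment. Specifically, I must confirm that the credibility-adjusted discount factor $\beta q$ and the averaging of expectations across the current and alternative regimes do not alter the qualitative conclusion: the decoupling must survive the Schaumburg–Tambalotti formulation, and I need the strict concavity from $\mu_R,\mu_s>0$ together with controllability to guarantee a \emph{unique} stabilizing solution of the dynamic game (as opposed to the multiplicity of Blanchard–Kahn equilibria under ad hoc rules). I would handle this by appealing to the standard existence-uniqueness theorem for the discounted LQR stabilizing solution, verifying that $\beta q<1$ keeps the problem well-posed and that detectability/stabilizability hold for each scalar block, and then concluding that the Ramsey equilibrium is the unique one selecting the stable manifold—thereby distinguishing this third equilibrium from Leeper's two.
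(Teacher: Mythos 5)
Your proposal is correct and takes essentially the same approach as the paper: both exploit the decoupled (diagonal) structure of $\mathbf{A}$ and $\mathbf{B}$ to split the LQR into two scalar Riccati problems, deduce $P_{\pi}=Q_{\pi}$ and $F^{\ast}=0$ (hence the peg, $\lambda_{\pi}^{\ast}=0$, and $\pi_{0}^{\ast}=0$) from $A_{\pi}=0$, and obtain the stabilizing debt feedback yielding $0<\lambda_{b}^{\ast}<1$. The only differences are executional: the paper computes the stable root of the Hamiltonian characteristic polynomial explicitly, using the quasi-commitment-adjusted open-loop eigenvalue $1/(\beta q)$ where you write $1/\beta$ (a point you correctly flag as needing verification), and it also discusses the edge case $Q_{\pi}=0$, which you omit.
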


\begin{proof}
Ramsey optimal policy amounts to find the solution of a linear quadratic
regulator (Chatelain and Ralf (2019a)). The autocorrelation parameters of
policy maker's shocks are optimally set to zero, else they increase the
volatility of inflation and of public debt in the policy maker's loss
function. The optimal expected value of the loss function is:%
\begin{align}
L^{\ast}  &  =-\frac{1}{2}\left(
\begin{array}
[c]{cc}%
\widehat{\pi}_{0}^{\ast} & \widehat{b}_{0}%
\end{array}
\right)  \mathbf{P}\left(
\begin{array}
[c]{cc}%
\widehat{\pi}_{0}^{\ast} & \widehat{b}_{0}%
\end{array}
\right)  ^{T}\text{ with }\mathbf{P}=\left(
\begin{array}
[c]{cc}%
P_{\pi} & P_{\pi b}\\
P_{\pi b} & P_{b}%
\end{array}
\right)  \text{,}\\
\widehat{\pi}_{0}^{\ast}  &  =\pi_{0}^{\ast}-\pi^{\ast}\text{ and }%
\widehat{b}_{0}=b_{0}-b^{\ast}\text{.}%
\end{align}

where $\mathbf{P}$ is a positive symmetric square matrix of dimension two
which is the solution of the discrete algebraic Riccati equation (DARE) of the
LQR. The optimal initial anchor of inflation $\pi_{0}^{\ast}$ on public debt
$b_{0}$ is given by:%
\begin{equation}
\left(  \frac{\partial L^{\ast}}{\partial\pi_{t}}\right)  _{t=0}=P_{\pi
}\widehat{\pi}_{0}^{\ast}+P_{\pi b}\widehat{b}_{0}=0\Rightarrow\widehat{\pi
}_{0}^{\ast}=\frac{P_{\pi b}}{P_{\pi}}\widehat{b}_{0}\text{ if }P_{\pi}\neq0.
\end{equation}

This initial transversality (or natural boundary) condition eliminates the
indeterminacy of initial inflation $\pi_{0}$ put forward for the \textit{ad
hoc} policy rules solution. Because the system is decoupled ($A_{\pi
b}=A_{b\pi}=0$) and because the weight on the product $\pi_{t}b_{t}$ is zero
in the loss function $Q_{\pi b}=0$, this implies a zero weight $P_{\pi b}=0$
in the optimal value of the loss function. Therefore, $P_{\pi}\geq0$ and
$P_{b}\geq0$ are solutions of scalar discrete algebraic Riccati equations
(DARE). Because the system of the policy transmission mechanism is decoupled,
the optimal program is identical if we consolidate the central bank and the
treasury as a single policy maker or if we consider that they are distinct
policy makers. For inflation, this DARE\ equation is:
\[
P_{\pi}=Q_{\pi}+\beta A_{\pi}^{^{\prime}}P_{\pi}A_{\pi}-\beta^{^{\prime}%
}A_{\pi}^{^{\prime}}P_{\pi}B_{\pi R}\left(  \mu_{R}\mathbf{+}\beta B_{\pi
R}^{\prime}P_{\pi}B_{\pi R}\right)  ^{-1}\beta B_{\pi R}^{^{\prime}}P_{\pi
}A_{\pi}=Q_{\pi}.
\]

Because $A_{\pi b}=A_{b\pi}=A_{\pi}=0$, this implies that the optimal loss
function parameter $P_{\pi}$ is equal to the inflation weight in the loss
function: $P_{\pi}=Q_{\pi}$. Therefore, if the policy maker has a non-zero
weight on inflation volatility in his loss function $Q_{\pi}>0$, optimal
initial inflation is zero because $P_{\pi b}=0$: $\pi_{0}^{\ast}=0$. Because
inflation dynamics are decoupled from the dynamics of predetermined public
debt, inflation behaves exactly as in a degenerate rational equilibrium model
where there is no predetermined variable. At any date $t$, if ever there is a
monetary policy shock $\varepsilon_{t}^{R}$, inflation instantaneously jumps
back to equilibrium $\pi_{t}^{\ast}=0$. There are no transitory dynamics. The
volatility of inflation is zero.

Because the open-loop system is already at the zero lower bound of inflation
persistence ($A_{\pi}=0$) and because there is a \emph{non-zero} cost of
interest rate volatility ($\mu_{R}>0$) in the loss function, it is optimal to
set an interest rate peg $R_{t}=R^{\ast}$. Therefore, the quadratic term of
interest rate volatility $\mu_{R}\left(  R_{t}-R^{\ast}\right)  ^{2}$ is
minimized at its zero lower bound zero in the loss function. This implies a
Taylor rule parameter equal to zero: $F^{\ast}=0<\beta$, a zero
auto-correlation of monetary policy shocks $\rho_{\varepsilon^{R}}=0$, and a
zero variance of monetary policy shocks $\sigma_{\varepsilon^{R}}^{2}=0$
initiated by the policy maker.

If there is a zero cost of inflation in the loss function ($Q_{\pi}=0$), the
policy maker does not care about inflation. There is an indeterminacy for the
optimal choice of the initial value of inflation $\pi_{0}^{\ast}$. However,
the implied volatility originated by this indeterminacy does not matter in the
policy maker's loss function.

For public debt, the solution is a scalar case of Ramsey optimal policy under
quasi-commitment (we use the linear dynamics instead of the log-linear
dynamics).
\[
E_{t}b_{t+1}-b^{\ast}=\frac{1}{\sqrt{\beta q}}\left(  b_{t}-b^{\ast}\right)
-\sqrt{\beta q}(s_{t}-s^{\ast})\text{ with }A_{b}=\frac{1}{\beta q}\text{ or
}\beta qA=1\text{ and }B_{bs}=-1.
\]

Optimal public debt persistence (or auto-correlation or closed-loop
eigenvalue) is the stable root of the characteristic polynomial of the
Hamiltonian system:%
\begin{align*}
0  &  =\lambda^{2}-S\lambda+\frac{1}{\beta q}\text{ with }\lambda^{\ast
}\lambda_{2}=\frac{1}{\beta q}\text{ with}\\
S  &  =A+\frac{1}{A\beta q}+\frac{B^{2}}{A}\frac{Q_{b}}{\mu_{s}}=1+\frac
{1}{\beta q}+\beta q\frac{Q_{b}}{\mu_{s}}\\
0  &  <\lambda_{b}^{\ast}\left(  \mu_{s},\beta q\right)  =\frac{1}{2}\left(
S-\sqrt{S^{2}-\frac{4}{\beta q}}\right)  <1.
\end{align*}

The optimal public debt persistence or autocorrelation $\lambda_{b}^{\ast}$
increases with the cost $\mu_{s}$ of changing the policy instrument (lump sum
taxes). Its boundaries are given by limits when the cost of changing policy
instrument $\mu_{s}$ tends to zero and when tends $\mu_{s}$ to infinity:%
\[
\underset{\mu_{s}\rightarrow0}{\lim}\lambda_{b}^{\ast}\left(  \mu_{s},\beta
q\right)  =\underset{\mu_{b}\rightarrow0}{\lim}\frac{1}{2}\left(  \frac{\beta
q}{\mu_{s}}-\frac{\beta q}{\mu_{s}}\right)  =0\text{ and }\underset{\mu
_{s}\rightarrow+\infty}{\lim}\lambda_{b}^{\ast}\left(  \mu_{s},\beta q\right)
=\frac{1}{\beta qA}=1.
\]

The fiscal rule parameter $G_{b}$ remains in the range of values so that the
persistence of public debt is strictly positive and strictly below one:%
\[
\frac{1}{\beta q}-1<G_{b}^{\ast}=\frac{\lambda_{b}^{\ast}-A}{B}=\frac{1}{\beta
q}-\lambda_{b}^{\ast}<\frac{1}{\beta q}\text{.}%
\]

The optimal loss function parameter $P_{b}$ is:%
\[
P_{b}=\frac{\frac{1}{A}\frac{Q_{b}}{q\beta}}{1-\lambda_{b}^{\ast}}=\frac
{Q_{b}}{1-\lambda_{b}^{\ast}}\text{ .}%
\]

It is also the solution of a scalar algebraic Riccati quadratic equation. For
a given initial value of predetermined public debt $b_{0}$ and for preferences
$Q_{b}\geq0$ which can be equal to zero, the optimal expected value of the
policy maker loss function is:%
\begin{align*}
\text{If }Q_{\pi}  &  >0:\text{ }L^{\ast}=-\frac{1}{2}\left(
\begin{array}
[c]{cc}%
0 & \widehat{b}_{0}%
\end{array}
\right)  \left(
\begin{array}
[c]{cc}%
Q_{\pi} & 0\\
0 & \frac{Q_{b}}{1-\lambda_{b}^{\ast}}%
\end{array}
\right)  \left(
\begin{array}
[c]{cc}%
0 & \widehat{b}_{0}%
\end{array}
\right)  ^{T}=\text{ }-\frac{1}{2}\frac{Q_{b}}{1-\lambda_{b}^{\ast}}\left(
b_{0}-b^{\ast}\right)  ^{2}.\\
\text{ If }Q_{\pi}  &  =0:\text{ }L^{\ast}=-\frac{1}{2}\left(
\begin{array}
[c]{cc}%
\widehat{\pi}_{0}^{\ast} & \widehat{b}_{0}%
\end{array}
\right)  \left(
\begin{array}
[c]{cc}%
0 & 0\\
0 & \frac{Q_{b}}{1-\lambda_{b}^{\ast}}%
\end{array}
\right)  \left(
\begin{array}
[c]{cc}%
\widehat{\pi}_{0}^{\ast} & \widehat{b}_{0}%
\end{array}
\right)  ^{T}=\text{ }-\frac{1}{2}\frac{Q_{b}}{1-\lambda_{b}^{\ast}}\left(
b_{0}-b^{\ast}\right)  ^{2}.
\end{align*}

A\ rational policy maker with quadratic preferences including a convex cost of
changing policy instruments (interest rate smoothing $\mu_{R}>0$ and tax
smoothing $\mu_{s}>0$) taking into account private sector expectations with a
minimal credibility ($q>0$) eliminates the indeterminacy of initial inflation
$\pi_{0}$ if the policy maker's preferences includes a non-zero weight
($Q_{\pi}>0$) on inflation volatility in his loss function. Else, a rational
policy maker neglects the indeterminacy of initial inflation because inflation
volatility has zero weight ($Q_{\pi}=0$) in his loss function. A SCILAB code
is available from the authors to solve Ramsey optimal policy with numerical values.
\end{proof}

\section{Conclusion}

For the model of frictionless endowment economies, Ramsey optimal policy is an
interest rate peg and a "passive" fiscal policy. It is a third equilibrium
with respect to Leeper's (1991) two equilibria with simple feedback rules.

\end{document}